\newcites{appb}{Additional references}
\newcommand{\ronote}[1]{}
\newcommand{\rvnote}[1]{}
\newcommand{\dtv}[2]{\mathrm{d}_{\mathrm{TV}}(#1,#2)}
\newcommand{\BC}[2]{\mathrm{BC}(#1,#2)}
\DeclarePairedDelimiterX\diverg[2]{(}{)}{#1 \mathrel{}\mathclose{}\delimsize\|\mathopen{}\mathrel{} #2}
\newcommand{\Dtr}[2]{\mathrm{D}_{\mathrm{tr}}(#1,#2)}
\newcommand{\Fid}[2]{\mathrm{F}(#1,#2)}
\newcommand{\Fidroot}[2]{\mathfrak{f}(#1,#2)}
\newcommand{\Infid}[2]{\ol{\mathrm{F}}(#1,#2)}
\newcommand{\Infidroot}[2]{\ol{\mathfrak{f}}(#1,#2)}
\newcommand{\Succ}{\textsc{Succ}}
\newcommand{\Fail}{\textsc{Fail}}
\newcommand{\Loss}{\textsc{Loss}}
\begin{document}

\title{The Quantum Union Bound made easy}

\author{Ryan O'Donnell\footnotemark[1]
\and Ramgopal Venkateswaran%
\thanks{Computer Science Dept., CMU.
      \{\texttt{odonnell@cs},\texttt{ramgopav@andrew}\}\texttt{.cmu.edu}.
      The author ordering was randomized.}
}
\date{}

\maketitle

\vspace{-2ex}
\begin{abstract}
    \noindent We give a short proof of Gao's Quantum Union Bound and \mbox{Gentle Sequential Measurement theorems.}
\end{abstract}

\section{Introduction}  \label{sec:intro}
Let $\rho \in \C^{d \times d}$ be a quantum mixed state and let $A_1, \dots, A_m \geq 0$ be (orthogonal) projectors on~$\C^d$, which may be thought of as ``quantum events''.
We write $\overline{A}_t = \Id - A_t$, where $\Id$ is the identity operator.
For intuition, we think of the $A_t$'s as ``good'' events that happen with high probability: we write
\[
    {\E}_{\rho}[A_t] \coloneqq \tr(\rho A_t) = 1 - \eps_t,
\]
and hence the ``bad'' event $\ol{A}_t$ has $\E_{\rho}[\ol{A}_t] = \eps_t$.
Suppose we now sequentially measure~$\rho$ with the two-outcome projective measurements $(\ol{A}_1, A_1)$, $(\ol{A}_2, A_2)$, \dots, $(\ol{A}_m, A_m)$.
For $0 \leq t \leq m$, let $\rho_t$ denote the state conditioned on outcomes~$A_1, \dots, A_t$ all occurring.
The Quantum Union Bound question now asks, ``What is the probability, $\Succ$,  that all~$m$ ``good'' outcomes occur?''
We may also ask the related question of Gentle Sequential Measurement: Conditioned on all good outcomes occurring, how far is the resulting state~$\rho_m$ from~$\rho$ (say, in trace distance)?

For full details of the history of these questions, see the discussion in~\cite{KMW19}.
An important milestone regarding the Quantum Union Bound came from Sen~\cite{Sen12}, who established $\Fail \leq 2 \sqrt{\Loss}$, where we denote $\Fail = 1 - \Succ$ and $\Loss = {\sum}_t \eps_t$.
Subsequently, Gao~\cite{Gao15} obtained the square of Sen's upper bound.
His results were:
\begin{theorem}                                     \label{thm:gaos1b}
    \emph{(Gao's Quantum Union Bound.)}  $\Fail \leq 4\Loss$. \hfill $(\heartsuit)$
\end{theorem}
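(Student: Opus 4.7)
My plan is to work in the Hilbert--Schmidt (HS) Hilbert space of $d \times d$ matrices, viewing $\rho^{1/2}$ as a unit vector under $\langle X, Y \rangle_{\mathrm{HS}} = \tr(X^\dagger Y)$.  Setting $U_t := A_t A_{t-1} \cdots A_1$ and $U_0 := \Id$, one has $\Succ = \tr(U_m \rho U_m^\dagger) = \|U_m \rho^{1/2}\|_{\mathrm{HS}}^2$.  The single-step operator identity $U_{t-1}^\dagger U_{t-1} - U_t^\dagger U_t = U_{t-1}^\dagger \ol{A}_t U_{t-1}$ telescopes immediately to the key formula
\[
\Fail \;=\; \sum_{t=1}^m \|\ol{A}_t U_{t-1} \rho^{1/2}\|_{\mathrm{HS}}^2.
\]

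Having reduced $\Fail$ to a sum of HS-squared norms, I would compare each ``perturbed'' vector $\ol{A}_t U_{t-1} \rho^{1/2}$ against its ``undisturbed'' counterpart $\ol{A}_t \rho^{1/2}$, whose HS norm squared is precisely $\eps_t$.  Writing $\ol{A}_t U_{t-1}\rho^{1/2} = \ol{A}_t\rho^{1/2} + \ol{A}_t(U_{t-1}-\Id)\rho^{1/2}$ and applying the HS triangle inequality gives
\[
\|\ol{A}_t U_{t-1} \rho^{1/2}\|_{\mathrm{HS}} \;\leq\; \sqrt{\eps_t} + \|\ol{A}_t (U_{t-1} - \Id) \rho^{1/2}\|_{\mathrm{HS}},
\]
after which squaring via $(a+b)^2 \leq 2a^2 + 2b^2$ and summing in $t$ produces
\[
\Fail \;\leq\; 2 \Loss + 2 \sum_{t=1}^m \|\ol{A}_t (U_{t-1} - \Id) \rho^{1/2}\|_{\mathrm{HS}}^2.
\]

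The hard step is the remaining ``second-order'' inequality $\sum_{t=1}^m \|\ol{A}_t (U_{t-1} - \Id) \rho^{1/2}\|_{\mathrm{HS}}^2 \leq \Loss$, which combined with the above immediately yields $\Fail \leq 4 \Loss$.  The naive expansion $U_{t-1} - \Id = -\sum_{s<t} \ol{A}_s U_{s-1}$ followed by Cauchy--Schwarz leaks factors of $m$ through the resulting cross terms, so a structural argument is needed---this is precisely the point where Gao's linear-in-$\Loss$ bound must improve upon Sen's $2\sqrt{\Loss}$.  My attack would be an induction on $m$ with a strengthened hypothesis that simultaneously tracks a Gentle-Sequential-Measurement-style disturbance like $\|(U_m-\Id)\rho^{1/2}\|_{\mathrm{HS}}$ (which is why the two results are usually proved together); alternatively, one might try to establish the underlying operator inequality $\sum_{t=1}^m (U_{t-1}-\Id)^\dagger \ol{A}_t (U_{t-1}-\Id) \leq \sum_{t=1}^m \ol{A}_t$ directly, leveraging the fact that each $A_t$ is an orthogonal projector.
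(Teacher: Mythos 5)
Your reduction is correct as far as it goes: the telescoping identity $\Id - U_m^\dagger U_m = \sum_t U_{t-1}^\dagger \ol{A}_t U_{t-1}$ does give $\Fail = \sum_t \|\ol{A}_t U_{t-1}\rho^{1/2}\|_{\mathrm{HS}}^2$, and the triangle inequality plus $(a+b)^2 \leq 2a^2 + 2b^2$ correctly reduces $(\heartsuit)$ to the claim $\sum_t \|\ol{A}_t(U_{t-1}-\Id)\rho^{1/2}\|_{\mathrm{HS}}^2 \leq \Loss$. But that claim is the entire content of the theorem, and you do not prove it --- you only name two candidate strategies. Everything you actually establish is routine bookkeeping; the one inequality that separates Gao's linear-in-$\Loss$ bound from Sen's square-root bound is left as a conjecture. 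As written, this is a genuine gap, not a proof.

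The good news is that your conjectured operator inequality is true, and the ``structural argument'' you were looking for is a second telescoping rather than an induction. Write $V_t = U_t - \Id$, so $V_0 = 0$ and $V_t = A_t V_{t-1} - \ol{A}_t$. Since $A_t \ol{A}_t = 0$ the cross terms vanish when expanding $V_t^\dagger V_t$, and since $A_t$ and $\ol{A}_t$ are complementary projectors, $V_{t-1}^\dagger A_t V_{t-1} = V_{t-1}^\dagger V_{t-1} - V_{t-1}^\dagger \ol{A}_t V_{t-1}$; hence
\[
    V_t^\dagger V_t \;=\; V_{t-1}^\dagger V_{t-1} \;-\; V_{t-1}^\dagger \ol{A}_t V_{t-1} \;+\; \ol{A}_t,
\]
and summing over $t$ gives $\sum_t V_{t-1}^\dagger \ol{A}_t V_{t-1} = \sum_t \ol{A}_t - V_m^\dagger V_m \leq \sum_t \ol{A}_t$, which is exactly the operator inequality you proposed; taking $\rho$-expectations finishes your argument (and the discarded term ${\E}_\rho[V_m^\dagger V_m]$ even yields a gentle-measurement-type statement for free). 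With this lemma supplied, your route becomes a valid proof that is genuinely different from the paper's: you work with Hilbert--Schmidt norms of unnormalized post-measurement vectors, close in spirit to Sen's and Gao's original arguments, whereas the paper telescopes the fidelity-weighted quantity $r_t = \sqrt{p_t}\sqrt{\Fid{\rho}{\rho_t}}$ via its \Cref{lem:2b} and needs only one Cauchy--Schwarz at the end. The paper's route buys more: it simultaneously yields the stronger \Cref{thm:1}, hence the Gentle Sequential Measurement bound $\Infid{\rho}{\rho_m} \leq \Loss$, and the sharper constant $4\Loss/(1+\Loss)^2$; your route, once repaired, gives $4\Loss$ but loses a factor of~$2$ at the $(a+b)^2 \leq 2a^2+2b^2$ step that must then be recovered only in the aggregate.
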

\begin{theorem}                                     \label{thm:gaos1a}
    \emph{(Gentle Sequential Measurement.)} $\Dtr{\rho}{\rho_m} \leq \sqrt{\Loss}$. \hfill $(\diamondsuit)$
\end{theorem}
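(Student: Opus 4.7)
My plan is to reduce to pure states, control the Euclidean distance between $\ket\psi$ and $M\ket\psi$ (where $M = A_m A_{m-1}\cdots A_1$) by a short induction, and then convert this to trace distance by a one-line algebraic inequality.

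First I would reduce to $\rho = \ketbra\psi\psi$. Pick any purification $\ket\Psi \in \C^d\otimes\C^D$ of $\rho$ and perform $(A_t\otimes\Id,\bar A_t\otimes\Id)$ in place of $(A_t,\bar A_t)$: every $\eps_t$, $\Succ$, and $\Loss$ is unchanged, and $\ket{\Psi_m} := (M\otimes\Id)\ket\Psi/\sqrt\Succ$ purifies $\rho_m$. Since partial trace contracts trace distance, it suffices to prove $(\diamondsuit)$ in the pure case.

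Next, for pure $\ket\psi$, I would introduce the error vector $\ket{e_t} := \ket\psi - A_t\cdots A_1\ket\psi$, so $\ket{e_0}=0$ and $\ket{e_m} = \ket\psi - M\ket\psi$. A one-line manipulation yields the recursion
\[
    \ket{e_t} = A_t\ket{e_{t-1}} + \bar A_t\ket\psi,
\]
and the two summands are orthogonal because $A_t\bar A_t = 0$. Hence $\|e_t\|^2 = \|A_t e_{t-1}\|^2 + \eps_t \le \|e_{t-1}\|^2 + \eps_t$, and by induction $\|\ket\psi - M\ket\psi\|^2 \le \Loss$.

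To finish, set $c = \bra\psi M\ket\psi$, so $\ket{\psi_m} = M\ket\psi/\sqrt\Succ$ gives $\Dtr{\psi}{\psi_m}^2 = 1 - |c|^2/\Succ$. Expanding and completing the square,
\[
    \|\ket\psi - M\ket\psi\|^2 - \bigl(1 - |c|^2/\Succ\bigr) = \bigl((\Succ - \Re c)^2 + (\Im c)^2\bigr)/\Succ \ge 0,
\]
so $\Dtr{\psi}{\psi_m}^2 \le \|\ket\psi - M\ket\psi\|^2 \le \Loss$, yielding $(\diamondsuit)$. The main obstacle I foresee is spotting the right recursion: the naive telescoping $\ket\psi - M\ket\psi = \sum_t A_m\cdots A_{t+1}\bar A_t\ket\psi$ only gives $\sum_t\sqrt{\eps_t}$ after the triangle inequality, which is a factor of $\sqrt m$ too weak. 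The decomposition above is better because it introduces exactly one fresh $\bar A_t\ket\psi$ term (with norm exactly $\sqrt{\eps_t}$) at each step, orthogonally separated from the accumulated past error, so the losses add without interference.
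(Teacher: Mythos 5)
Your proof is correct, and it takes a genuinely different route from the paper. The paper stays with mixed states and fidelity throughout: its key \Cref{lem:2b} decomposes $\|\sqrt{\rho}\sqrt{\sigma}\|_1$ via $A + \ol{A}$ and matrix Cauchy--Schwarz, and then either telescopes the quantity $r_t = \sqrt{p_t}\sqrt{\Fid{\rho}{\rho_t}}$ (\Cref{thm:1}) or iterates the one-step bound $\Infid{\rho}{\sigma|A} \leq \Infid{\rho}{\sigma} + \E_\rho[\ol{A}]$ (\Cref{prop:gentle-seq}), obtaining the stronger conclusion $\Infid{\rho}{\rho_m} \leq \Loss$ before invoking Fuchs--van de Graaf. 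You instead purify, track the unnormalized error vector $\ket{e_t} = \ket{\psi} - A_t\cdots A_1\ket{\psi}$ via the recursion $\ket{e_t} = A_t\ket{e_{t-1}} + \ol{A}_t\ket{\psi}$ whose two terms are orthogonal (this is essentially Sen's vector-based method, which the paper notes, via Anshu, is what its own argument becomes when written with subnormalized pure states), and finish with a direct completing-the-square computation showing $1 - |c|^2/\Succ \leq \|\ket{\psi}-M\ket{\psi}\|^2$ for $c = \bra{\psi}M\ket{\psi}$. All steps check out: the purification preserves each $\eps_t$ and $\Succ$, partial trace contracts trace distance, orthogonality gives $\|e_t\|^2 \leq \|e_{t-1}\|^2 + \eps_t$, and the final algebra is right. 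What your approach buys is elementarity --- only Pythagoras, projector contractivity, and a two-line identity, with no fidelity machinery --- and in fact your pure-state inequality $1-|c|^2/\Succ \leq \Loss$ is exactly $\Infid{\Psi}{\Psi_m} \leq \Loss$, so monotonicity of fidelity under partial trace would even recover $(\diamondsuit')$. What it gives up is the common generalization: it does not yield \Cref{thm:1}, and by itself the estimate $\|\ket{\psi}-M\ket{\psi}\|^2 \leq \Loss$ only recovers Sen's $\Fail \leq 2\sqrt{\Loss}$ rather than Gao's $\Fail \leq 4\Loss$, whereas the paper's single lemma delivers both theorems at once.
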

Khabbazi Oskouei, Mancini, and Wilde~\cite{KMW19} obtained a further improvement to $(\heartsuit)$, discussed in \Cref{sec:refine}.
In this work we give a simple proof of a common generalization of \Cref{thm:gaos1b,thm:gaos1a}.
Denoting  fidelity (\Cref{not:fid}) by $\Fid{{}\cdot{}}{{}\cdot{}}$, we show:
\begin{theorem}                                     \label{thm:1}
    $\displaystyle 1 \leq \sqrt{\Succ} \sqrt{\Fid{\rho}{\rho_m}} + \sqrt{\Fail}\sqrt{\Loss}$.
\end{theorem}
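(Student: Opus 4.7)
The plan is to reduce the theorem to a single trace-norm estimate and prove it via a block Cauchy--Schwarz inequality with a carefully chosen factorization.

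Set $\Pi := A_m A_{m-1}\cdots A_1$, so $\Succ\,\rho_m = \Pi\rho\Pi^\dagger$. Writing $K := \sqrt{\rho}\,\Pi\sqrt{\rho}$, the fact that $KK^\dagger = \Succ\sqrt{\rho}\,\rho_m\sqrt{\rho}$ combined with $\Fid{\rho}{\rho_m} = \tr\sqrt{\sqrt{\rho}\,\rho_m\sqrt{\rho}}$ yields the standard identity
\[
    \sqrt{\Succ}\cdot\Fid{\rho}{\rho_m} \;=\; \bigl\|\sqrt{\rho}\,\Pi\sqrt{\rho}\bigr\|_1.
\]
Since $\Fid{\rho}{\rho_m}\leq 1$ gives $\sqrt{\Fid{\rho}{\rho_m}}\geq \Fid{\rho}{\rho_m}$, it would suffice to prove the strictly stronger inequality $\sqrt{\Succ}\cdot\Fid{\rho}{\rho_m}\geq 1-\sqrt{\Fail\cdot\Loss}$. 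Writing $\sqrt{\rho}\,\Pi\sqrt{\rho} = \rho - \sqrt{\rho}(I-\Pi)\sqrt{\rho}$ and invoking the reverse triangle inequality for the trace norm, this in turn reduces to the single estimate
\[
    \bigl\|\sqrt{\rho}(I-\Pi)\sqrt{\rho}\bigr\|_1 \;\leq\; \sqrt{\Fail\cdot\Loss}.
\]

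To establish this estimate, let $\Pi_t := A_t\cdots A_1$ (with $\Pi_0 := I$). The telescoping identity $I-\Pi = \sum_{t=1}^{m} \ol{A}_t\Pi_{t-1}$, together with the idempotence $\ol{A}_t^2=\ol{A}_t$, yields the factorization
\[
    \sqrt{\rho}(I-\Pi)\sqrt{\rho} \;=\; \sum_{t=1}^{m} \bigl(\sqrt{\rho}\,\ol{A}_t\bigr)\bigl(\ol{A}_t\Pi_{t-1}\sqrt{\rho}\bigr).
\]
Assembling the left factors into a row block $\mathcal{X}$ and the right factors into a column block $\mathcal{Y}$, one has $\sqrt{\rho}(I-\Pi)\sqrt{\rho}=\mathcal{X}\mathcal{Y}$, and therefore $\|\mathcal{X}\mathcal{Y}\|_1\leq \|\mathcal{X}\|_2\|\mathcal{Y}\|_2$ by H\"older. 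The two Frobenius sums then evaluate cleanly: the left equals $\sum_t \tr(\ol{A}_t\rho\ol{A}_t) = \sum_t\eps_t = \Loss$, while using $\ol{A}_t^2=\ol{A}_t$ and cyclicity the right equals $\sum_t \tr(\Pi_{t-1}\rho\Pi_{t-1}^\dagger \ol{A}_t) = \sum_t \Succ_{t-1}\tr(\ol{A}_t \rho_{t-1}) = \sum_t(\Succ_{t-1}-\Succ_t)$, which telescopes to $\Succ_0-\Succ_m=\Fail$. Multiplying gives exactly $\sqrt{\Fail\cdot\Loss}$.

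The step I expect to require the most insight is the factorization itself. The ``obvious'' split $X_t = \sqrt{\rho}\,\ol{A}_t$, $Y_t = \Pi_{t-1}\sqrt{\rho}$ makes the right Frobenius sum equal to $\sum_t\Succ_{t-1}$, which can be as large as $m$ and only yields a bound of order $\sqrt{m\,\Loss}$. Duplicating the idempotent $\ol{A}_t$ and placing one copy on each side is what activates the sequential-measurement structure, so that the right block instead picks up the \emph{conditional} failure probabilities $\Succ_{t-1}\tr(\ol{A}_t\rho_{t-1})$, whose telescoping sum is exactly $\Fail$.
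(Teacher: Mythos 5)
Your proof is correct, and it takes a genuinely different route from the paper's. The paper telescopes the scalar quantity $r_t = \sqrt{p_t}\sqrt{\Fid{\rho}{\rho_t}}$ (which coincides with your $\|\sqrt{\rho}\,A_t\cdots A_1\sqrt{\rho}\|_1$), bounding each increment $r_{t-1}-r_t$ via a standalone lemma that uses one triangle inequality and one matrix Cauchy--Schwarz per step --- with $\ol{A}_t=\ol{A}_t^{1/2}\ol{A}_t^{1/2}$ playing exactly the role of your duplicated idempotent --- and then finishes with a scalar Cauchy--Schwarz on $\sum_t\sqrt{q_t}\sqrt{\eps_t}$. You instead telescope $\Id-\Pi$ at the operator level inside a single trace norm, apply the triangle inequality only once, and fold both Cauchy--Schwarz applications into a single block H\"older step; your evaluation of $\|\mathcal{Y}\|_2^2$ recovers precisely the paper's first-failure probabilities $q_t=p_{t-1}{\E}_{\rho_{t-1}}[\ol{A}_t]$ and their telescoping sum $\Fail$, and your closing remark about why the idempotent must be split across the two blocks is exactly the right diagnosis. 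Your one-shot version is self-contained and is essentially the mixed-state, block-matrix form of Sen's subnormalized-vector argument (the paper notes, via Anshu, that its own proof has this structure when written with pure states); what the paper's per-step \Cref{lem:2b} buys is reusability --- it directly yields \Cref{prop:gentle-seq} and the refinement of \Cref{sec:refine} by modifying individual terms, which in your formulation would require pulling the $t=1$ block out of the H\"older step. One notational caution: with the paper's convention $\Fid{\rho}{\sigma}=(\tr\sqrt{\sqrt{\rho}\sigma\sqrt{\rho}})^2$, your identity reads $\|\sqrt{\rho}\Pi\sqrt{\rho}\|_1=\sqrt{\Succ}\sqrt{\Fid{\rho}{\rho_m}}$, so the estimate $\|\sqrt{\rho}\Pi\sqrt{\rho}\|_1\geq 1-\sqrt{\Fail}\sqrt{\Loss}$ that you actually establish is \emph{exactly} the theorem rather than a strict strengthening of it; the intermediate comparison of $\sqrt{\Fid{\rho}{\rho_m}}$ with $\Fid{\rho}{\rho_m}$ is a harmless artifact of mixing the squared and unsquared fidelity conventions and can simply be deleted.
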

Then to deduce \Cref{thm:gaos1b} from \Cref{thm:1}, we use $\Fid{\rho}{\rho_m} \leq 1$ to get
\begin{equation}
    1 \leq \sqrt{\Succ} + \sqrt{\Fail}\sqrt{\Loss}
    = \sqrt{1-\Fail} + \sqrt{\Fail}\sqrt{\Loss} \ \ \implies\ \  \Fail  \leq \frac{4\Loss}{(1+\Loss)^2} \leq 4 \Loss. \tag*{$(\heartsuit')$}
\end{equation}
Here $\Fail  \leq \frac{4\Loss}{(1+\Loss)^2}$ arises from solving the quadratic for~$\Fail$; it assumes $\Loss \leq 1$.
One can also get $(\heartsuit')$ via AM-GM: $\tfrac12 \Fail \leq 1 - \sqrt{1-\Fail} \leq \sqrt{\Fail}\sqrt{\Loss} = \sqrt{\tfrac12\Fail}\sqrt{2\Loss} \leq \tfrac12(\tfrac12 \Fail + 2\Loss)$.

To deduce \Cref{thm:gaos1a}, we apply Cauchy--Schwarz to \Cref{thm:1} obtaining
\begin{equation}
    1 \leq \cancelto{1}{\sqrt{\Succ + \Fail}}\ \sqrt{\Fid{\rho}{\rho_m} + \Loss} \ \ \implies\ \ %\Fid{\rho}{\rho_m} \geq 1 - \Loss \quad\implies\quad
    \Infid{\rho}{\rho_m} \coloneqq 1 - \Fid{\rho}{\rho_m} \leq \Loss, \tag*{$(\diamondsuit')$}
\end{equation}
stronger than \Cref{thm:gaos1a} thanks to the Fuchs--van de Graaf~\cite{FG99} inequality \mbox{$\Dtr{\rho}{\sigma} \leq \Infid{\rho}{\sigma}^{1/2}$}.

\subsection{Notation}
\begin{notation}  \label{not:fid}
    For two  states $\rho, \sigma \in \C^{d \times d}$, their fidelity is $\Fid{\rho}{\sigma} = \|\sqrt{\rho} \sqrt{\sigma}\|_1^2 = (\tr \sqrt{\sqrt{\rho} \sigma \sqrt{\rho}})^2$, where we recall the Schatten $1$-norm $\|M\|_1 = \tr\sqrt{M M^\dagger} = \tr\sqrt{M^\dagger M}$.
\end{notation}
The fidelity between two states is at most~$1$; this is a consequence of the matrix Cauchy--Schwarz inequality $\|M_1 M_2\|_1^2 \leq \tr(M_1^\dagger M_1) \tr(M_2^\dagger M_2)$.
\begin{notation}    \label{not:condition}
    Let $M$ be a $d$-column matrix with $M^\dagger M \leq \Id$, thought of as a nondestructive measurement matrix (so that $M^\dagger M$ is one element of a POVM).
    The probability of $M$ occurring when~$\rho$ is measured is $\E_\rho[M^\dagger M]$, and we denote the resulting state conditioned on this outcome by
    $
        \rho | M = (M \rho M^\dagger) / \E_\rho[M^\dagger M].
    $
    (We tacitly assume the denominator is nonzero.)
\end{notation}

\begin{remark}
    We work over finite-dimensional Hilbert spaces for simplicity, but this is inessential; the proofs extend to any separable Hilbert space.
\end{remark}

\section{Proof}
\begin{lemma}                                       \label{lem:2b}
    For quantum states $\rho, \sigma \in \C^{d \times d}$ and $A \in \C^{d \times d}$ with $\ol{A} \geq 0$,
    \[
        \sqrt{\Fid{\rho}{\sigma}} \leq \sqrt{{\E}_\sigma[\textnormal{$A^\dagger A$}]} \sqrt{\Fid{\rho}{\sigma|A}} +  \sqrt{{\E}_\sigma[\ol{A}]}\sqrt{{\E}_\rho[\ol{A}]}.% \overset{\textnormal{(if $A$ is a projector)}}{=} \|A\|_\sigma \sqrt{\Fid{\rho}{\sigma|A}} +  \|\ol{A}\|_\sigma \|\ol{A}\|_\rho.
    \]
\end{lemma}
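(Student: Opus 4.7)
The plan is to start from the identity $\sqrt{\Fid{\rho}{\sigma}} = \|\sqrt\rho\sqrt\sigma\|_1$ (per \Cref{not:fid}), insert the resolution $\Id = A + \ol A$, and split by the triangle inequality for the Schatten $1$-norm:
\[
    \sqrt{\Fid{\rho}{\sigma}} \ = \ \|\sqrt\rho\, A\, \sqrt\sigma + \sqrt\rho\, \ol A\, \sqrt\sigma\|_1 \ \leq \ \|\sqrt\rho\, A\, \sqrt\sigma\|_1 + \|\sqrt\rho\, \ol A\, \sqrt\sigma\|_1.
\]
The remaining task is to identify the first summand with $\sqrt{\E_\sigma[A^\dagger A]}\,\sqrt{\Fid{\rho}{\sigma|A}}$ and to bound the second by $\sqrt{\E_\rho[\ol A]}\,\sqrt{\E_\sigma[\ol A]}$.

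For the first summand I would just expand $\|X\|_1 = \tr\sqrt{XX^\dagger}$. Note that $\ol A \geq 0$ forces $A = \Id - \ol A$ to be Hermitian, hence $(\sqrt\rho A\sqrt\sigma)(\sqrt\rho A\sqrt\sigma)^\dagger = \sqrt\rho\, A\sigma A\,\sqrt\rho$, so $\|\sqrt\rho A\sqrt\sigma\|_1 = \tr\sqrt{\sqrt\rho A\sigma A\sqrt\rho}$. Pulling out the normalization $p \coloneqq \E_\sigma[A^\dagger A]$ from the definition $\sigma|A = A\sigma A/p$ recognizes the inside as $p\cdot\sqrt\rho(\sigma|A)\sqrt\rho$, so this equals $\sqrt{p}\cdot\tr\sqrt{\sqrt\rho(\sigma|A)\sqrt\rho}$, which is $\sqrt{\E_\sigma[A^\dagger A]}\,\sqrt{\Fid{\rho}{\sigma|A}}$ by the second formula in \Cref{not:fid}.

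For the second summand the key move is to exploit $\ol A \geq 0$ by writing $\ol A = \sqrt{\ol A}\cdot\sqrt{\ol A}$ and factoring
\[
    \sqrt\rho\, \ol A\, \sqrt\sigma \ = \ \bigl(\sqrt\rho\, \sqrt{\ol A}\bigr)\bigl(\sqrt{\ol A}\, \sqrt\sigma\bigr).
\]
Applying the matrix Cauchy--Schwarz inequality $\|M_1 M_2\|_1^2 \leq \tr(M_1^\dagger M_1)\tr(M_2^\dagger M_2)$ (noted right after \Cref{not:fid}) with these two factors yields $\|\sqrt\rho\,\ol A\,\sqrt\sigma\|_1^2 \leq \tr(\sqrt{\ol A}\,\rho\,\sqrt{\ol A})\cdot\tr(\sqrt\sigma\, \ol A\, \sqrt\sigma) = \E_\rho[\ol A]\cdot\E_\sigma[\ol A]$, as required. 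The only real content is the factorization through $\sqrt{\ol A}$, which is exactly what the hypothesis $\ol A \geq 0$ is in place to permit; I do not anticipate any further obstacle.
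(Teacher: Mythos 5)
Your proposal is correct and follows essentially the same route as the paper's own proof: the same insertion of $\Id = A + \ol{A}$ with the triangle inequality for $\|\cdot\|_1$, the same identification of $\|\sqrt{\rho}\,A\sqrt{\sigma}\|_1$ with $\sqrt{\E_\sigma[A^\dagger A]}\sqrt{\Fid{\rho}{\sigma|A}}$ via the definition of $\sigma|A$, and the same factorization $\ol{A} = \ol{A}^{1/2}\ol{A}^{1/2}$ followed by matrix Cauchy--Schwarz for the second term. No gaps.
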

\begin{proof}
    We have $\sqrt{\Fid{\rho}{\sigma}}
    = \|\sqrt{\rho} \sqrt{\sigma}\|_1
    = \|\sqrt{\rho} (A + \ol{A}) \sqrt{\sigma}\|_1
    \leq \|\sqrt{\rho} A \sqrt{\sigma}\|_1 + \|\sqrt{\rho} \ol{A} \sqrt{\sigma}\|_1$.
    On one hand,
    \[
        \|\sqrt{\rho} A \sqrt{\sigma}\|_1 = \tr \sqrt{\sqrt{\rho} A \sigma A^\dagger \sqrt{\rho}} = \sqrt{{\E}_\sigma[A^\dagger A]} \sqrt{\Fid{\rho}{\sigma | A}}.
    \]
    On the other hand, by matrix Cauchy--Schwarz we have
    \[
        \|\sqrt{\rho} \ol{A} \sqrt{\sigma}\|_1 = \left\|\sqrt{\rho} \ol{A}^{1/2} \ol{A}^{1/2} \sqrt{\sigma}\right\|_1\leq \sqrt{{\E}_\sigma[\ol{A}]}\sqrt{{\E}_\rho[\ol{A}]}.
    \]
    (Remark: in \Cref{app:fid} we note that in fact $\|\sqrt{\rho} \ol{A} \sqrt{\sigma}\|_1^2 = \Fid{\rho|\ol{A}^{1/2}}{\sigma|\ol{A}^{1/2}} \cdot {\E}_\sigma[\ol{A}]\cdot {\E}_\rho[\ol{A}]$.)
\end{proof}

For a geometric interpretation with pure states, see \Cref{sec:int2}.  We now prove \Cref{thm:1}.
\begin{proof}
    For $0 \leq t \leq m$, consider the event that the ``good'' outcomes $A_1, \dots, A_t$  all occur.
    We write $p_t$ for the probability of this event, $\rho_t$ for the state $\rho$ conditioned on this event, and $r_t = \sqrt{p_t} \sqrt{\Fid{\rho}{\rho_t}}$.
    For $1 \leq t \leq m$ we write $q_t$ for the probability that $\ol{A}_t$ is the first ``bad'' outcome that occurs. Now
    \[
        r_{t-1} - r_{t}
        = \sqrt{p_{t-1}}\parens*{\sqrt{\Fid{\rho}{\rho_{t-1}}} - \sqrt{{\E}_{\rho_{t-1}}[A_t]} \sqrt{\Fid{\rho}{\rho_{t-1} | A_t}}}
        \leq \sqrt{p_{t-1}}\sqrt{{\E}_{\rho_{t-1}}[\ol{A}_t]}\sqrt{{\E}_{\rho}[\ol{A}_t]}
        = \sqrt{q_t} \sqrt{\eps_t},
    \]
    where the inequality used \Cref{lem:2b} and $A_t^\dagger A_t = A_t$.
    Summing this for $t = 1 \dots m$ yields
    \[
        1 - \sqrt{\Succ} \sqrt{\Fid{\rho}{\rho_m}} = r_0 - r_m \leq \sum_{t=1}^m \sqrt{q_t}\sqrt{\eps_t} \leq \sqrt{{\sum}_t q_t} \sqrt{{\sum}_{t} \eps_t} = \sqrt{\Fail}{\sqrt{\Loss}},
    \]
    where the last inequality is Cauchy--Schwarz.
\end{proof}

Anshu~\cite{Ans21} has observed that if the above proof is written using subnormalized pure states, it becomes structurally very similar to Sen's proof~\cite{Sen12}.

%\subsection*{Acknowledgments}
%R.O.\ would like to thank Costin B\u{a}descu for several discussions.
%R.O.\ is supported by NSF grant FET-1909310 and ARO grant W911NF2110001.
%This material is based upon work supported by the National Science Foundation under the grant number listed above.
%Any opinions, findings and conclusions or recommendations expressed in this material are those of the author    and do not necessarily reflect the views of the National Science Foundation (NSF).

%\bibliographystyle{abbrv}
%\bibliography{quantum-minimal}
%
%\newpage
%
%\appendix

\section{Additional commentary}

\subsection{Simpler proof of Gentle Sequential Measurement}
We remark that if one's only goal is to prove \Cref{thm:gaos1a}, the proof is even simpler.
Assuming $A$ is a projector, applying Cauchy--Schwarz to \Cref{lem:2b} yields
\[
    \sqrt{\Fid{\rho}{\sigma}} \leq \sqrt{{\E}_\sigma[A]} \sqrt{\Fid{\rho}{\sigma|A}} +  \sqrt{{\E}_\sigma[\ol{A}]}\sqrt{{\E}_\rho[\ol{A}]} \leq \cancelto{1}{\sqrt{{\E}_\sigma[A] + {\E}_\sigma[\ol{A}]}} \  \sqrt{\Fid{\rho}{\sigma|A} + {\E}_\rho[\ol{A}]}.
\]
Squaring and rearranging yields:
\begin{proposition}                                     \label{prop:gentle-seq}
    If $\rho, \sigma \in \C^{d \times d}$ are states and $A \in \C^{d \times d}$ is a projector,
    $
        \Infid{\rho}{\sigma|A} \leq \Infid{\rho}{\sigma} + {\E}_\rho[\ol{A}].
    $
\end{proposition}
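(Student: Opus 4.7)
The plan is to extract the proposition directly from \Cref{lem:2b} by specializing to the projector case and applying a single Cauchy--Schwarz, which is essentially the calculation displayed just above the proposition statement.

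First I would invoke \Cref{lem:2b}: for any states $\rho,\sigma$ and any $A$ with $\ol{A} \geq 0$,
\[
    \sqrt{\Fid{\rho}{\sigma}} \leq \sqrt{{\E}_\sigma[A^\dagger A]} \sqrt{\Fid{\rho}{\sigma|A}} + \sqrt{{\E}_\sigma[\ol{A}]}\sqrt{{\E}_\rho[\ol{A}]}.
\]
Since $A$ is assumed to be a projector, $A^\dagger A = A$, so the first factor under the first square root becomes simply $\E_\sigma[A]$.

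Next I would apply the vector Cauchy--Schwarz inequality $\sqrt{a_1 b_1} + \sqrt{a_2 b_2} \leq \sqrt{a_1+a_2}\sqrt{b_1+b_2}$ with $a_1 = \E_\sigma[A]$, $a_2 = \E_\sigma[\ol{A}]$, $b_1 = \Fid{\rho}{\sigma|A}$, and $b_2 = \E_\rho[\ol{A}]$. The key simplification is that $a_1 + a_2 = \E_\sigma[A + \ol{A}] = \tr(\sigma) = 1$, so the inequality collapses to
\[
    \sqrt{\Fid{\rho}{\sigma}} \leq \sqrt{\Fid{\rho}{\sigma|A} + \E_\rho[\ol{A}]}.
\]

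Finally I would square both sides and rearrange: $\Fid{\rho}{\sigma} \leq \Fid{\rho}{\sigma|A} + \E_\rho[\ol{A}]$ becomes $1 - \Fid{\rho}{\sigma|A} \leq 1 - \Fid{\rho}{\sigma} + \E_\rho[\ol{A}]$, which is exactly $\Infid{\rho}{\sigma|A} \leq \Infid{\rho}{\sigma} + \E_\rho[\ol{A}]$. There is no real obstacle here: all the work is packed into \Cref{lem:2b}, and the only step where one needs to pause is recognizing that the projector hypothesis lets $\E_\sigma[A]$ and $\E_\sigma[\ol{A}]$ sum cleanly to $1$ under the Cauchy--Schwarz, which is what makes the $\sigma$-dependence on the right-hand side vanish.
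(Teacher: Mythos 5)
Your proposal is correct and is essentially identical to the paper's own derivation: the paper also obtains \Cref{prop:gentle-seq} by specializing \Cref{lem:2b} to a projector, applying Cauchy--Schwarz so that $\E_\sigma[A] + \E_\sigma[\ol{A}] = 1$ collapses the first factor, and then squaring and rearranging.
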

\noindent Taking $\sigma = \rho_{t-1}$ and $A = A_t$ we get $\Infid{\rho}{\rho_t} \leq \Infid{\rho}{\rho_{t-1}} + \eps_t$, and hence $\Infid{\rho}{\rho_m} \leq \Loss$ by iterating.

\subsection{Fidelity and conditioning}  \label{app:fid}
We first recall some traditional matrix notation:
\begin{notation}
    If $M$ is any matrix, recall that $|M|$ denotes $\sqrt{M^\dagger M}$% (the unique positive semidefinite square-root of $M^\dagger M$)
    , so $\|M\|_1 = \tr |M|$.
    %Note that $|M| = M$ if $M \geq 0$.
\end{notation}
\begin{fact}  \label{fact:abs}
    For any $M \in \C^{m \times \ell}$, $N \in \C^{\ell \times n}$, it is immediate that $|M \cdot N| = \bigl|\,|M| \cdot N\,\bigr|$.
    Taking trace on both sides and using $\|X\|_1 = \|X^\dagger\|_1$, we can infer $\|M \cdot N \|_1 = \bigl\|\,|M| \cdot |N^\dagger| \,\bigr\|_1$.
\end{fact}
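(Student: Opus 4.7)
The plan is to verify the pointwise identity $|MN| = \bigl|\,|M| \cdot N\,\bigr|$ by squaring both sides and invoking uniqueness of positive square roots, and then to extract the trace-norm identity by applying the adjoint-invariance $\|X\|_1 = \|X^\dagger\|_1$ together with Hermiticity of $|M|$ and $|N^\dagger|$.

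First I would expand $|MN|^2$ directly from the definition:
\[
    |MN|^2 = (MN)^\dagger(MN) = N^\dagger M^\dagger M N = N^\dagger |M|^2 N.
\]
Since $|M|$ is the \emph{positive} square root of $M^\dagger M$, it is Hermitian, so $N^\dagger |M|^2 N = (|M|N)^\dagger(|M|N) = \bigl|\,|M|N\,\bigr|^2$. Both $|MN|$ and $\bigl|\,|M|N\,\bigr|$ are positive semidefinite, and the positive square root of a PSD matrix is unique, so taking square roots gives the claimed equality.

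Second, I would derive the trace-norm identity by chaining three standard moves. Taking traces on both sides of the pointwise identity gives $\|MN\|_1 = \bigl\|\,|M|N\,\bigr\|_1$. Applying $\|X\|_1 = \|X^\dagger\|_1$ with $X = |M|N$ and using $|M|^\dagger = |M|$ yields $\bigl\|\,|M|N\,\bigr\|_1 = \bigl\|N^\dagger |M|\bigr\|_1$. Now reapply the pointwise identity (with the roles of $M$ and $N$ played by $N^\dagger$ and $|M|$) to obtain $\bigl\|N^\dagger |M|\bigr\|_1 = \bigl\|\,|N^\dagger|\cdot |M|\,\bigr\|_1$, and take one more adjoint (using Hermiticity of both $|N^\dagger|$ and $|M|$) to land at $\bigl\|\,|M|\cdot|N^\dagger|\,\bigr\|_1$.

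There is no real obstacle; the only point demanding a sliver of care is the passage from the squared identity to the unsquared one, where I must invoke uniqueness of the positive square root rather than just cancel. Everything else is bookkeeping with adjoints.
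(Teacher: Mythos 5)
Your proof is correct and follows exactly the route the paper intends: the pointwise identity $|MN|=\bigl|\,|M|N\,\bigr|$ via squaring and uniqueness of the positive square root (this is the step the paper calls ``immediate''), followed by the trace/adjoint/reapply chain that the paper compresses into one sentence. Nothing to add.
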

Now we introduce some additional notation:
\begin{notation}
    For $\rho \in \C^{d \times d}$ a quantum state and $A \in \mathbb{C}^{d \times d}$, we write $\|A\|_\rho = \sqrt{\E_\rho[A^\dagger A]}$.
\end{notation}
\begin{fact}  \label{fact:Frhorho}
    If $A$ is a projector then $\|A\|^2_\rho = \E_\rho[A] = \Fid{\rho}{\rho | A}$.
    (The latter formula, basically the ``Gentle Measurement Lemma''~\cite{Win99,ON02}, follows just by writing the definitions and using $\sqrt{\rho} A \sqrt{\rho} \geq 0$, since $A \geq 0$.)
\end{fact}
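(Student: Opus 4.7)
The plan is to verify both equalities by unfolding definitions, with the single substantive observation being that $\sqrt{\rho}A\sqrt{\rho}$ inherits positive semidefiniteness from $A$.

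For the first equality $\|A\|_\rho^2 = \E_\rho[A]$, I would write $\|A\|_\rho^2 = \E_\rho[A^\dagger A]$ straight from the definition of $\|\cdot\|_\rho$ and then invoke the two defining properties of a projector — self-adjointness ($A^\dagger = A$) and idempotence ($A^2 = A$) — to collapse $A^\dagger A$ to $A$.

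For the second equality $\E_\rho[A] = \Fid{\rho}{\rho|A}$, I would substitute the explicit form $\rho|A = A\rho A / \E_\rho[A]$ (\Cref{not:condition} with $M = A$, using $A^\dagger A = A$) into the definition $\Fid{\rho}{\sigma} = (\tr\sqrt{\sqrt{\rho}\,\sigma\,\sqrt{\rho}})^2$. The operand of the outer square root becomes $\tfrac{1}{\E_\rho[A]}\sqrt{\rho}A\rho A\sqrt{\rho} = \tfrac{1}{\E_\rho[A]}(\sqrt{\rho}A\sqrt{\rho})^2$, where I have used $\sqrt{\rho}\sqrt{\rho} = \rho$ to fuse the middle factor. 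The key step — and the only place anything beyond definitions is used — is that $\sqrt{\rho}A\sqrt{\rho} \geq 0$, because $A \geq 0$, which lets me take $\sqrt{(\sqrt{\rho}A\sqrt{\rho})^2} = \sqrt{\rho}A\sqrt{\rho}$ without an absolute value. Then cyclicity of trace gives $\tr(\sqrt{\rho}A\sqrt{\rho}) = \tr(A\rho) = \E_\rho[A]$, and after dividing by $\sqrt{\E_\rho[A]}$ and squaring we recover $\E_\rho[A]$ exactly.

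There is no real obstacle here; the parenthetical in the fact statement already flags this as pure definition-chasing. The only subtlety worth naming is that positivity of $A$, not merely self-adjointness, is what allows one to drop the absolute value in the inner square root — this is where the projector hypothesis pulls its weight a second time, beyond its earlier role in identifying $A^\dagger A$ with $A$.
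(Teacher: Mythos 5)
Your proposal is correct and is exactly the argument the paper intends: the parenthetical in the fact is the whole proof, namely unfolding the definitions and using $\sqrt{\rho}A\sqrt{\rho}\geq 0$ (which follows from $A\geq 0$) to evaluate the inner square root of $(\sqrt{\rho}A\sqrt{\rho})^2$ without an absolute value. Your closing observation about where positivity (as opposed to mere self-adjointness) is used is accurate and matches the paper's emphasis.
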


\begin{remark}  \label{rem:notcond2}
     \Cref{not:condition} may alternately be written as  $\displaystyle \sqrt{\rho | M} = \frac{|\sqrt{\rho} M^\dagger|}{||M||_{\rho}}$.
\end{remark}
Although \Cref{thm:1} looks neat as stated, we actually prefer the definition of fidelity that doesn't have the square built in (as in, e.g., the Nielsen--Chuang text).
For lack of better symbols, we introduce the following notation for it:
\begin{notation}
    We write $\Fidroot{\rho}{\sigma} = \sqrt{\Fid{\rho}{\sigma}} = \|\sqrt{\rho}\sqrt{\sigma}\|_1$, and $\Infidroot{\rho}{\sigma} = \sqrt{\Infid{\rho}{\sigma}} = \sqrt{1 - \Fidroot{\rho}{\sigma}^2}$.
\end{notation}

Now the following is an immediate consequence of \Cref{fact:abs} and \Cref{rem:notcond2}:
\begin{proposition} \label{prop:fid-cond3}
    If $\rho, \sigma \in \C^{d\times d}$ are  states and $M, N \in \mathbb{C}^{d \times d}$, then $\displaystyle \Fidroot{\rho | M}{\sigma | N} = \frac{\|\sqrt{\rho} M^\dagger N \sqrt{\sigma}\|_1}{\|M\|_\rho \|N\|_\sigma}$.
\end{proposition}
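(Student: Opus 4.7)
The plan is to unpack the definition $\Fidroot{\rho|M}{\sigma|N} = \|\sqrt{\rho|M}\,\sqrt{\sigma|N}\|_1$ and then apply the two cited results in sequence. First I would rewrite each of the two square roots using \Cref{rem:notcond2}, giving $\sqrt{\rho|M} = |\sqrt{\rho}\,M^\dagger|/\|M\|_\rho$ and $\sqrt{\sigma|N} = |\sqrt{\sigma}\,N^\dagger|/\|N\|_\sigma$. Pulling the two positive scalars out of the $\|\cdot\|_1$ converts the expression into
\[
    \Fidroot{\rho|M}{\sigma|N} \;=\; \frac{\bigl\|\,|\sqrt{\rho}\,M^\dagger|\,\cdot\,|\sqrt{\sigma}\,N^\dagger|\,\bigr\|_1}{\|M\|_\rho\,\|N\|_\sigma}.
\]

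Next I would apply \Cref{fact:abs} in the form $\|P\cdot Q\|_1 = \bigl\|\,|P|\cdot|Q^\dagger|\,\bigr\|_1$, read right-to-left, with $P = \sqrt{\rho}\,M^\dagger$ and $Q = N\sqrt{\sigma}$. Then $|P| = |\sqrt{\rho}\,M^\dagger|$ matches the left factor in the numerator; and since $Q^\dagger = \sqrt{\sigma}\,N^\dagger$, we have $|Q^\dagger| = |\sqrt{\sigma}\,N^\dagger|$, matching the right factor. Thus the numerator equals $\|P\cdot Q\|_1 = \|\sqrt{\rho}\,M^\dagger N\sqrt{\sigma}\|_1$, and we obtain the claimed identity.

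Since the paper advertises this as ``an immediate consequence,'' there is no real obstacle — the only thing to be careful about is the placement of daggers when invoking \Cref{fact:abs}, namely confirming that the form $|\sqrt{\sigma}\,N^\dagger|$ produced by \Cref{rem:notcond2} is exactly the $|Q^\dagger|$ needed (rather than $|Q|$), which is why the product inside the trace norm ends up as $M^\dagger N$ rather than, say, $M^\dagger N^\dagger$ or $M N$.
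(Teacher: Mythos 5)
Your proposal is correct and is exactly the argument the paper intends: the paper declares the proposition ``an immediate consequence of \Cref{fact:abs} and \Cref{rem:notcond2},'' and you have simply spelled out that chain — substituting $\sqrt{\rho|M} = |\sqrt{\rho}M^\dagger|/\|M\|_\rho$ and $\sqrt{\sigma|N} = |\sqrt{\sigma}N^\dagger|/\|N\|_\sigma$, then invoking $\|P\cdot Q\|_1 = \bigl\|\,|P|\cdot|Q^\dagger|\,\bigr\|_1$ with $P = \sqrt{\rho}M^\dagger$, $Q = N\sqrt{\sigma}$. Your care with the dagger placement (checking $|Q^\dagger| = |\sqrt{\sigma}N^\dagger|$) is exactly the one point worth verifying, and you got it right.
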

This formula is quite useful.
In particular ($M = \Id$, $N = A$) it implies $\Fidroot{\rho}{\sigma | A} = \|\sqrt{\rho} A \sqrt{\sigma}\|_1/ \|A\|_\sigma$, which is identical to the first fact derived in our main \Cref{lem:2b}.
Note furthermore that if $A \geq 0$,
\[
     \|\sqrt{\rho} A \sqrt{\sigma}\|_1 = \|\sqrt{\rho} \sqrt{A} \sqrt{A} \sqrt{\sigma}\|_1 = \Fidroot{\rho | \sqrt{A}}{\sigma | \sqrt{A}} \cdot \sqrt{{\E}_\rho[A]}\sqrt{{\E}_\sigma[A]}
\]
where we used \Cref{prop:fid-cond3} again ($M = N = \sqrt{A}$).
This shows the \emph{second} fact derived in our main \Cref{lem:2b} (more precisely, it shows the ``Remark'' at the end, after replacing $A$ with $\ol{A}$).
Finally, putting these two implications together yields:
\begin{corollary}                                       \label{cor:best}
    If $A \geq 0$, then $\displaystyle \Fidroot{\rho}{\sigma | A} \leq \frac{\sqrt{{\E}_\rho[A]}\sqrt{{\E}_\sigma[A]}}{\|A\|_\sigma}$.    If $A$ is furthermore a projector, the right-hand side simplifies to $\sqrt{{\E}_\rho[A]}$; i.e., $\Fidroot{\rho}{\sigma | A} \leq \Fidroot{\rho}{\rho | A}$.
\end{corollary}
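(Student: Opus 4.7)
The plan is to directly chain the two identities already derived in the paragraph preceding the corollary, then invoke the universal bound that fidelity is at most $1$. Concretely, \Cref{prop:fid-cond3} applied with $M = \Id$ and $N = A$ gives $\Fidroot{\rho}{\sigma|A} = \|\sqrt{\rho} A \sqrt{\sigma}\|_1 / \|A\|_\sigma$; and since $A \geq 0$ I can factor $A = \sqrt{A}\sqrt{A}$ and apply \Cref{prop:fid-cond3} a second time with $M = N = \sqrt{A}$ (which is self-adjoint) to rewrite $\|\sqrt{\rho} A \sqrt{\sigma}\|_1 = \Fidroot{\rho|\sqrt{A}}{\sigma|\sqrt{A}} \cdot \sqrt{\E_\rho[A]}\sqrt{\E_\sigma[A]}$. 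Combining the two identities and bounding $\Fidroot{\rho|\sqrt{A}}{\sigma|\sqrt{A}} \leq 1$ (the universal fidelity bound noted just after \Cref{not:fid}) yields the first inequality.

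For the projector case, observe that $A^\dagger A = A^2 = A$, so $\|A\|_\sigma = \sqrt{\E_\sigma[A]}$ cancels one factor in the numerator and the right-hand side collapses to $\sqrt{\E_\rho[A]}$. The asserted identity $\sqrt{\E_\rho[A]} = \Fidroot{\rho}{\rho|A}$ is then exactly \Cref{fact:Frhorho}.

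There is no substantive obstacle: the corollary is essentially a packaging of facts already worked out via \Cref{prop:fid-cond3} and \Cref{fact:Frhorho}, with the trivial fidelity bound $\leq 1$ as the one external input. The only minor care needed is to notice that $\sqrt{A}$ is self-adjoint when $A \geq 0$, so that the $M^\dagger N$ factor inside \Cref{prop:fid-cond3} reconstitutes $\sqrt{A}\cdot\sqrt{A} = A$ in the numerator without any adjoint subtlety.
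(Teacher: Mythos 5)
Your proposal is correct and follows exactly the paper's own route: it chains the two instances of \Cref{prop:fid-cond3} (with $M=\Id$, $N=A$ and with $M=N=\sqrt{A}$), drops the factor $\Fidroot{\rho|\sqrt{A}}{\sigma|\sqrt{A}} \leq 1$, and handles the projector case via $\|A\|_\sigma^2 = \E_\sigma[A]$ and \Cref{fact:Frhorho}. Nothing to add.
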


\subsection{Obtaining the bound from~\cite{KMW19}} \label{sec:refine}
The proof given by Khabbazi~Oskouei, Mancini, and Wilde~\cite{KMW19} included an improvement to Gao's Quantum Union Bound: they showed that
\begin{equation*} \tag{$\clubsuit$}
    \Fail^* \coloneqq \Fail - \eps_1  \leq p'\eps_1 + (p+p')\sum_{1 < t < m} \eps_t + p \eps_m
\end{equation*}
for any (positive) $p,p'$ with $1/p + 1/p' = 1$.
(Gao's bound is implied by the $p = p' = 2$ case.)
They also gave an application where it is essential that~$p$ may be made arbitrarily close to~$1$.
We can obtain the same bound by slightly modifying our proof of \Cref{thm:1}.

In the modified proof, we simply save on the first term since we know that $q_1 = \eps_1$.
%Writing $\Fail' = \Fail - q_1 = \Fail - \eps_1$ and $\Loss' = \Loss - \eps_1$, this gives
This gives
\[
    1 - \sqrt{\Succ} \sqrt{\Fid{\rho}{\rho_m}} \leq \eps_1 + \sqrt{\Fail^*}\sqrt{\Loss - \eps_1}.
\]
But from \Cref{cor:best} we obtain $\Fid{\rho}{\rho_m} = \Fid{\rho}{\rho_{m-1} | A_m}\leq \Fid{\rho}{\rho | A_m} = 1 - \eps_m$. Thus
\[
    1 - \sqrt{1-\Fail^* -\eps_1} \sqrt{1 - \eps_m} \leq \eps_1 + \sqrt{\Fail^*}\sqrt{{\sum}_{t > 1} \eps_t}.
\]

One can solve the associated quadratic equation for $\Fail^*$ to get a sharp, but messy, bound.\ronote{In particular, I am pretty sure you can get $\Fail^* \leq \frac{2(1-3L)\eps_1 + 4L + 2(1-L)\eps_m}{(1+L)^2}$ for $L = \sum_{1 < t < m} \eps_t$, and also $(\clubsuit)/(1+L)^2$}
More simply, we can use AM-GM twice to get  $1 - \sqrt{1-\Fail^* - \eps_1} \sqrt{1 - \eps_m} \geq \tfrac12 (\Fail^* + \eps_1 + \eps_m)$, and
\[
    \eps_1 + \sqrt{\Fail^*}\sqrt{{\sum}_{t > 1} \eps_t}
        = \eps_1 + \sqrt{p^{-1}\Fail^*}\sqrt{p{\sum}_{t > 1} \eps_t}
        \leq \eps_1 + \tfrac12p^{-1}\Fail^* + \tfrac12 p{\sum}_{t > 1} \eps_t.
\]
Putting these together yields $\displaystyle \Fail^* + \eps_1 + \eps_m \leq 2\eps_1 + p^{-1}\Fail^* + p{\sum}_{t > 1} \eps_t$, which yields $(\clubsuit)$ after multiplication by~$p'$ and rearrangement (note that $p+p' = pp'$).

\subsection{Intuition I: Bhattacharyya coefficient}
A useful way of discovering results concerning quantum fidelity is via analogy with its easier-to-understand classical counterpart:
\begin{notation}
    Recall that for two probability distributions $p,q$ on $[d]$, their Bhattacharyya coefficient is $\BC{p}{q} = \sum_{i=1}^d \sqrt{p_i}\sqrt{q_i} \in [0,1]$.
    (This equals $\Fidroot{\mathrm{diag}(p)}{\mathrm{diag}(q)}$.)
\end{notation}
The well-known classical analogue (indeed, consequence) of the Fuchs--van de Graaf inequality is:
\begin{fact}                                        \label{fact:classical-fvdg}
    The total variation distance $\dtv{p}{q} = \frac12 \|p-q\|_1$ satisfies $\dtv{p}{q} \leq \sqrt{1 - \BC{p}{q}^2}$.
    (This is slightly sharper than bounding total variation distance by Hellinger distance.)
\end{fact}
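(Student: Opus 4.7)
The plan is a direct Cauchy--Schwarz argument keyed on the elementary factorization $|p_i - q_i| = |\sqrt{p_i} - \sqrt{q_i}|\cdot(\sqrt{p_i} + \sqrt{q_i})$. Starting from this identity, I would first write
\[
    \dtv{p}{q} = \tfrac{1}{2}\sum_{i=1}^d |\sqrt{p_i}-\sqrt{q_i}|\cdot(\sqrt{p_i}+\sqrt{q_i})
\]
and then apply Cauchy--Schwarz to the two sequences $(|\sqrt{p_i}-\sqrt{q_i}|)_i$ and $(\sqrt{p_i}+\sqrt{q_i})_i$ to bound
\[
    \dtv{p}{q}^2 \leq \tfrac{1}{4}\biggl(\sum_i(\sqrt{p_i}-\sqrt{q_i})^2\biggr)\biggl(\sum_i(\sqrt{p_i}+\sqrt{q_i})^2\biggr).
\]

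Next, I would expand both factors, using $\sum_i p_i = \sum_i q_i = 1$, to rewrite $\sum_i(\sqrt{p_i}\pm\sqrt{q_i})^2 = 2(1\pm \BC{p}{q})$. Their product is then $4(1 - \BC{p}{q}^2)$, and taking square roots gives $\dtv{p}{q} \leq \sqrt{1-\BC{p}{q}^2}$ as claimed.

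There is not really any obstacle here beyond noticing the factorization; Cauchy--Schwarz lines up perfectly with the ``sum of squares'' expansions of $1 \pm \BC{p}{q}$. As a sanity check against the parenthetical remark, one observes that $\dhellsq{p}{q} = 1 - \BC{p}{q}$, so the conclusion rewrites as $\dtv{p}{q} \leq \dhell{p}{q}\sqrt{2 - \dhellsq{p}{q}} \leq \sqrt{2}\,\dhell{p}{q}$, exhibiting the promised (slight) improvement over the standard Hellinger bound.
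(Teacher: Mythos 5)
Your proof is correct and complete: the factorization $|p_i-q_i| = |\sqrt{p_i}-\sqrt{q_i}|\,(\sqrt{p_i}+\sqrt{q_i})$, Cauchy--Schwarz, and the identities $\sum_i(\sqrt{p_i}\pm\sqrt{q_i})^2 = 2(1\pm\BC{p}{q})$ combine exactly as you say to give $\dtv{p}{q}^2 \leq 1-\BC{p}{q}^2$. The paper states this Fact without proof (it is the standard classical Fuchs--van de Graaf argument, included only for intuition), so there is no in-paper proof to compare against; your argument is the canonical one, and your closing observation that $\sqrt{1-\BC{p}{q}^2} = \dhell{p}{q}\sqrt{2-\dhellsq{p}{q}} \leq \sqrt{2}\,\dhell{p}{q}$ correctly accounts for the parenthetical remark about sharpness relative to the Hellinger bound.
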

An event $A \subseteq [d]$ is the analogue of a projector, so the following  can be compared to \Cref{fact:Frhorho}:
\begin{fact}
    If $A \subseteq [d]$ is an event, then $p(A) = \Pr_p[A] = \E_p[1_A] = \BC{p}{p|A}^2$.
\end{fact}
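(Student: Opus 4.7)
The plan is a direct computation from the definitions, paralleling the quantum analogue \Cref{fact:Frhorho}. First I would unpack the conditional distribution: conditioning on an event restricts to $A$ and renormalizes, so $(p|A)_i = p_i/p(A)$ for $i \in A$ and $(p|A)_i = 0$ otherwise. Substituting into the definition of the Bhattacharyya coefficient then gives
\[
    \BC{p}{p|A} = \sum_{i=1}^d \sqrt{p_i}\sqrt{(p|A)_i} = \sum_{i \in A} \sqrt{p_i}\cdot\sqrt{p_i/p(A)} = \frac{1}{\sqrt{p(A)}}\sum_{i \in A} p_i = \sqrt{p(A)},
\]
and squaring both sides yields the claimed equality $\BC{p}{p|A}^2 = p(A)$. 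The other two equalities $p(A) = \Pr_p[A] = \E_p[1_A]$ are just restatements of the definition of the probability of an event under~$p$.

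I do not anticipate any real obstacle: the computation is a single line because everything in sight commutes (both $p$ and the indicator $1_A$ are ``diagonal''), so the step $\sqrt{p_i \cdot p_i/p(A)} = p_i/\sqrt{p(A)}$ is trivial. This is exactly what makes the classical picture simpler than the quantum one: for the quantum \Cref{fact:Frhorho} one must invoke $\sqrt{\rho}A\sqrt{\rho} \geq 0$ to strip the outer square-root coming from the trace-norm in $\Fidroot{\rho}{\rho|A} = \|\sqrt{\rho}\sqrt{\rho|A}\|_1$, whereas here the absolute values inside the $\ell_1$-sum are automatic.
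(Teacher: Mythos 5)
Your proof is correct and is exactly the intended (and only reasonable) argument: the paper states this fact without proof, but the identical unpacking of $(p|A)_i$ appears in its proof of the neighboring \Cref{lem:2b-classical}, and your one-line computation $\BC{p}{p|A} = \sum_{i\in A} p_i/\sqrt{p(A)} = \sqrt{p(A)}$ is precisely what justifies it. Your side remark contrasting this with the quantum \Cref{fact:Frhorho} also matches the paper's parenthetical explanation there.
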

The analogue of our main \Cref{lem:2b} is also natural in the classical case:
\begin{lemma}                                       \label{lem:2b-classical}
    If $A \subseteq [d]$, then $\displaystyle \BC{p}{q}  = \sqrt{q(A)}\,\BC{p}{q|A}+ \sqrt{q(\ol{A})}\sqrt{p(\ol{A})}$.
\end{lemma}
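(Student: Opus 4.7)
The plan is to derive the stated relation by direct computation, exactly paralleling the proof of \Cref{lem:2b} in the quantum case. I would start by expanding $\BC{p}{q} = \sum_{i=1}^d \sqrt{p_i}\sqrt{q_i}$ and splitting the sum using the partition $[d] = A \sqcup \ol{A}$, obtaining
\[
    \BC{p}{q} = \sum_{i \in A} \sqrt{p_i q_i} + \sum_{i \in \ol{A}} \sqrt{p_i q_i}.
\]
This is the scalar analogue of inserting $A + \ol{A}$ between $\sqrt{\rho}$ and $\sqrt{\sigma}$ in the quantum proof.

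Next I would identify each piece with a term on the RHS. For the first piece, unfolding the definition $(q|A)_i = q_i/q(A)$ for $i \in A$ yields
\[
    \sqrt{q(A)}\,\BC{p}{q|A} = \sqrt{q(A)} \sum_{i \in A} \sqrt{p_i \cdot q_i/q(A)} = \sum_{i \in A} \sqrt{p_i q_i},
\]
an exact equality with no slack. For the second piece, the elementary Cauchy--Schwarz inequality gives
\[
    \sum_{i \in \ol{A}} \sqrt{p_i}\sqrt{q_i} \leq \sqrt{{\sum}_{i \in \ol{A}} p_i}\sqrt{{\sum}_{i \in \ol{A}} q_i} = \sqrt{p(\ol{A})}\sqrt{q(\ol{A})},
\]
matching the second RHS term. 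These two facts are the scalar counterparts of the two displayed identities in the proof of \Cref{lem:2b}, and combining them yields the asserted relation.

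The main obstacle, which I must flag rather than hide, is that this combination produces ``$\leq$'' rather than ``$=$'': the Cauchy--Schwarz step is tight precisely when $p$ and $q$ are proportional on $\ol{A}$, and is strict otherwise. Since \Cref{lem:2b-classical} is advertised as the classical analogue of the quantum \Cref{lem:2b}, which is itself an inequality, I interpret the ``$=$'' in the stated lemma as a typo for ``$\leq$'' and take the inequality to be the intended content. The exact identity would read
\[
    \BC{p}{q} = \sqrt{q(A)}\,\BC{p}{q|A} + \sqrt{q(\ol{A})}\sqrt{p(\ol{A})} \cdot \BC{p|\ol{A}}{q|\ol{A}},
\]
precisely paralleling the Remark inside the proof of \Cref{lem:2b}; the factor $\BC{p|\ol{A}}{q|\ol{A}} \in [0,1]$ is exactly what Cauchy--Schwarz discards when passing from the identity to the stated bound.
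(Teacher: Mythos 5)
Your proof is correct and is essentially identical to the paper's own: split the sum over $A$ and $\ol{A}$, absorb the first piece exactly into $\sqrt{q(A)}\,\BC{p}{q|A}$, and apply Cauchy--Schwarz to the second. Your diagnosis of the ``$=$'' as a typo for ``$\leq$'' is vindicated by the paper's proof itself, which ends by ``applying Cauchy--Schwarz to the second term'' and hence can only yield an inequality; your exact-identity variant with the extra factor $\BC{p|\ol{A}}{q|\ol{A}}$ correctly mirrors the Remark in \Cref{lem:2b}.
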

\begin{proof}
    Since $(q|A)_i$ is $q_i/q(A)$ if $i \in A$, and is~$0$ if $i \in \ol{A}$, we get
    \[
        \BC{p}{q} = \sum_{i \in A} \sqrt{p_i} \sqrt{q_i} + \sum_{i \in \ol{A}} \sqrt{p_i} \sqrt{q_i} = \sqrt{q(A)}\, \BC{p}{q|A} + \sum_{i \in \ol{A}} \sqrt{p_i} \sqrt{q_i},
    \]
    and the result follows by applying Cauchy--Schwarz to the second term.
\end{proof}

\subsection{Intuition II: Pure states and geometry} \label{sec:int2}
\newcommand{\Proj}{\mathrm{Proj}}
As observed by Gao~\cite{Gao15}, a purification argument immediately shows that to prove quantum union bounds, it suffices to consider pure states.
This can assist with geometric intuition, particularly if one imagines --- with only mild loss of generality --- that all states and projectors are real.

In this case, let $\ket{\psi_t}$ denote the unit vector in~$\R^d$ obtained by conditioning on the first~$t$ projective measurements succeeding.
Then if $H = H_{t+1}$ denotes the subspace onto which~$A_{t+1}$ projects, the  analysis of the $(t+1)$th measurement really only depends on four vectors, namely $\Proj_{H} \ket{\psi_0}$, $\Proj_{H} \ket{\psi_t}$, $\Proj_{H^\bot} \ket{\psi_0}$, and $\Proj_{H^\bot} \ket{\psi_t}$.
So without loss of generality we may project everything into~$\R^4$, with the first three vectors spanning~$\R^3$.
We can then picture a globe in $\R^3$ of unit radius, with $H_{t+1}$ being  the plane of the equator, $\ket{\psi_0}$ and $\ket{\psi_{t+1}}$ lying on the globe's surface, and $\ket{\psi_t} = r \ket{\wt{\psi}_t} + \ket{\wt{\psi}_t^\bot}$ for some $\ket{\wt{\psi}_t}$ on the globe's surface, with $0 \leq r \leq 1$ and $\ket{\wt{\psi}_t^\bot}$ pointing into the fourth dimension.
For $j \in \{0,t,t+1\}$, we'll write $(\lambda_j, \phi_j)$ for the longitude/latitude of $\ket{\psi_j}$ (or $\ket{\wt{\psi}_j}$ when $j = t$).
We may assume that $\lambda_t = \lambda_{t+1} = 0$, and hence $\ket{\psi_{t+1}} = (0,0)$.
(See the left image in \Cref{fig}.)

For $j \in \{t,t+1\}$, let us write $\Delta_j$ for the angle between $\ket{\psi_0}$ and $\ket{\psi_j}$, and also write $\wt{\Delta}_t$ for the angle between $\ket{\psi_0}$ and $\ket{\wt{\psi}_t}$ (equivalently, $r \ket{\wt{\psi}_t}$).
We claim that
\[
    \cos \Delta_{t+1} = \cos \phi_0  \cos \lambda_0, \qquad \cos \wt{\Delta}_{t} = \cos \phi_t \cos \phi_0 \cos \lambda_0  + \sin \phi_t \sin \phi_0, \qquad \cos \Delta_t \leq \cos \wt{\Delta}_t.
\]
The first formula is the spherical Pythagorean Theorem applied to the  triangle with vertices $\ket{\psi_0}$, $(\lambda_0,0)$, and $\ket{\psi_{t+1}}$.
The second is the great-circle distance formula; equivalently, the spherical Cosine Law applied to the triangle formed by $\ket{\psi_0}$, the north pole (blue dot), and $\ket{\wt{\psi}_t}$.
Finally, the inequality holds because the angle, $\Delta_t$, that $\ket{\psi_0}$ makes with $\ket{\psi_t}$ is at least the angle, $\wt{\Delta}_t$, it makes with~$r\ket{\wt{\psi}_t}$, since the former is equal to the latter plus a vector $\ket{\wt{\psi}_t^\bot}$ that is orthogonal to both $\ket{\psi_0}$ and~$r\ket{\wt{\psi}_t}$.
Combining the above three results now yields
\begin{equation}    \label[ineq]{ineq:gao-trig}
    \cos \Delta_{t} \leq \cos \phi_t  \cos \Delta_{t+1} +  \sin \phi_t \sin \phi_0,
\end{equation}
which\ronote{ignoring signs} is exactly the relationship derived in our main \Cref{lem:2b} (with $\rho$ being $\ket{\psi_0}$ and $\sigma$ being~$\ket{\psi_t}$ and $A$ being projection onto $H_{t+1}$).

\begin{figure}[H]
     \includegraphics[width=.5\textwidth]{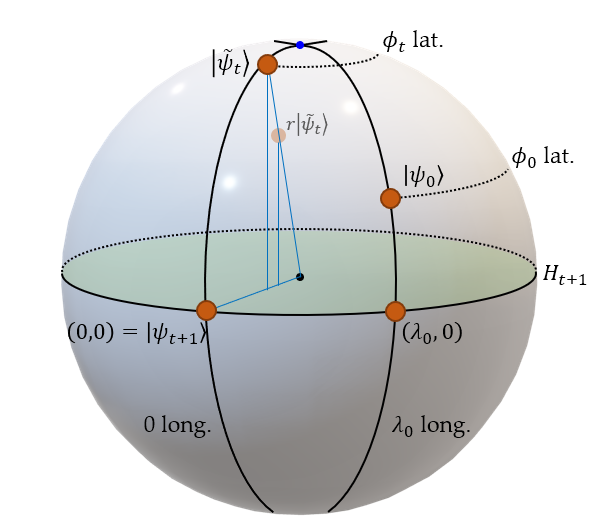}
     \includegraphics[width=.5\textwidth]{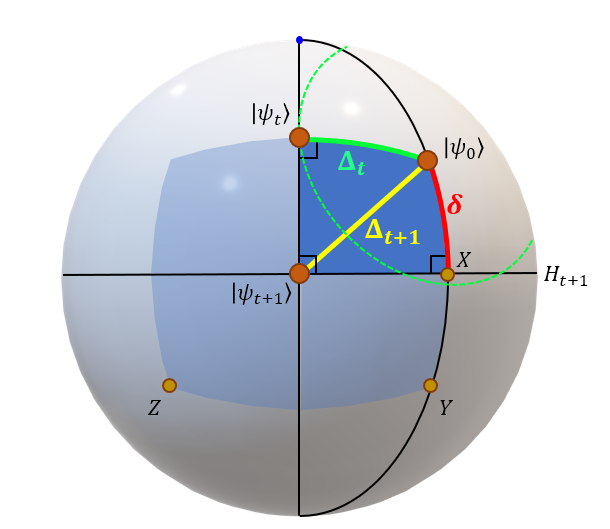}
     \caption{On the left, justifying \Cref{lem:2b} for pure states.  On the right, tightness for \Cref{thm:gaos1a}.} \label{fig}
\end{figure}

\subsection{Tightness}
The factor of~$4$ appearing in the Quantum Union Bound is tight, even in the case of one pure qubit with real amplitudes.
To see this, fix a large~$m$ and then consider $\delta \to 0^+$.
Now suppose the initial state of the qubit is $\ket{0}$, and $A_t$ projects onto the line in~$\R^2$ making an angle of $(-1)^t \cdot \delta$ with $\ket{0}$.
Then one hand, $\eps_t = \sin^2(\pm\delta) \sim \delta^2$ for each~$t$, so $\Loss \sim m \delta^2$.
On the other hand,
\[
    \Fail = 1 - (1 - \sin^2\delta)(1-\sin^2 2\delta)(1 - \sin^2 2\delta) \cdots (1 - \sin^2 2\delta) \sim (4m-3)\delta^2.
\]
From this we see that the constant ``$4$'' in \Cref{thm:gaos1b}'s $\Fail \leq 4\Loss$ cannot be replaced by any smaller constant.

In fact, the same idea can be used to show that the refined bound denoted $(\clubsuit)$ in \Cref{sec:refine}
%\begin{equation*} \tag{$\clubsuit$}
%    \forall c \geq 1 \quad \Fail \leq (2 + c^{-1})\eps_1 + (2 + c + c^{-1})\sum_{1 < t < m} \eps_t + (1 + c)\eps_m \phantom{\quad \forall c \geq 1,}
%\end{equation*}
is asymptotically tight for all fixed~$m \geq 2$ and $p,p'$.
To see this, let $\delta_t = a_t \delta$ for constants $a_1, \dots, a_m$, and let $A_t$ project onto the line in~$\R^2$ making an angle of $(-1)^t \cdot \delta_t$ with $\ket{0}$.
Then on one hand, $\eps_t = \sin^2(\pm \delta_t) \sim a_t^2 \delta^2$, and hence the bound from $(\clubsuit)$ is
\begin{equation}
    \Fail \lesssim \Bigl(a_1^2 + p' a_1^2 + (p+p')\sum_{1 < t < m} a_t^2 + p a_m^2\Bigr)\cdot \delta^2 \tag*{$(\clubsuit')$}
\end{equation}
On the other hand,
\begin{align*}
    \Fail &= 1 - (1- \sin^2\delta_1)(1 - \sin^2(\delta_1 + \delta_2))(1 - \sin^2(\delta_2+\delta_3)) \cdots (1- \sin^2(\delta_{m-1} + \delta_m)) \\
    &\sim \Bigl(a_1^2 + (a_1+a_2)^2 + (a_2 + a_3)^2 + \cdots + (a_{m-1} + a_m)^2\Bigr)\cdot \delta^2.
\end{align*}
But note that whenever $1/p + 1/p' = 1$, it is possible for $a_t, a_{t+1}$ to satisfy $(a_t + a_{t+1})^2 = p'a_t^2 + pa_{t+1}^2$.
(Specifically, this happens if $a_{t+1}/a_t = p'/p$.)
So if this identity is always satisfied, then $(\clubsuit')$ is indeed tight up to lower-order~$O(\delta^4)$ terms.\\

Next we show that Gentle Sequential Measurement bounds $(\diamondsuit')$ are exactly tight (assuming $\sum_t \eps_t \leq 1$), even in the case of pure state qutrits with real amplitudes.
This also implies exact tightness of $(\diamondsuit)$, since $\Dtr{\rho}{\sigma} = \Infid{\rho}{\sigma}^{1/2}$ for pure states $\rho, \sigma$.
To show this, suppose $\ket{\psi_0}$ and $\ket{\psi_t}$ are states in~$\R^3$ at angle $\Delta_t$, and let angle $\delta = \delta_{t+1}$ be given.
(One may imagine that $\sin^2 \Delta_t = \sum_{1 \leq i \leq t} \eps_t$ already, and $\sin^2 \delta = \eps_{t+1}$.)
We will show that there is a two-dimensional subspace $H_{t+1}$ (the image of $A_{t+1}$) such that: (i)~$H_{t+1}$ makes an angle of $\delta$ with $\ket{\psi_0}$; (ii)~the state $\ket{\psi_{t+1}}$ resulting from a successful measurement of~$\ket{\psi_t}$ by~$A_{t+1}$ has an angle $\Delta_{t+1}$ from $\ket{\psi_0}$ satisfying
\begin{equation}    \label{eqn:Deltat}
    \sin^2 \Delta_{t+1} = \sin^2 \Delta_{t} + \sin^2 \delta.
\end{equation}
As we can arrange this for every~$t$, we conclude that $(\diamondsuit')$ can be exactly tight.

It is not hard to see that to maximize $\Delta_{t+1}$, we should choose $H_{t+1}$ to ensure that the (great-circle) arc connecting $\ket{\psi_t}$ to $\ket{\psi_{t+1}}$ is orthogonal to the arc connecting $\ket{\psi_0}$ and $\ket{\psi_t}$, as in the image on the right of \Cref{fig}.
(In that image, one might imagine that $\ket{\psi_{t}}$ could have been any point on the green dashed small circle of radius $\Delta_t$ around~$\ket{\psi_0}$; to maximize $\Delta_{t+1}$ we want the arc connecting $\ket{\psi_{t}}$ to $H_{t+1}$ to be tangent to this green circle.)

Thus it remains to verify that \Cref{eqn:Deltat} holds for the dark blue ``Lambert (three-right-angle) quadrilateral'' with corners $\ket{\psi_0}$, $\ket{\psi_t}$, $\ket{\psi_{t+1}}$, and~$X$ (the state if $\ket{\psi_0}$ were successfully measured by~$A_{t+1}$).
This is an elementary (though perhaps lesser-known) fact of spherical geometry.
To verify it, one may form the three pale blue reflections of the Lambert quadrilateral, giving a centrally symmetric spherical quadrilateral.
Then it is easy to verify that the triangle formed by $\ket{\psi_0}$ and the points depicted as~$Y$ and~$Z$ form a so-called half-sum triangle (a \emph{right} right triangle, in the terminology of~\cite{DS08}), with the triangular angles at $\ket{\psi_0}$ and~$Z$ summing to the angle at~$Y$.
But then \Cref{eqn:Deltat} is immediate from Dickinson and Salmassi's ``Preferred Spherical Pythagorean Theorem''~\cite{DS08}.

\subsection{How we discovered our proof}
The proof we gave is short enough that one might imagine just discovering it from scratch.
Alternatively, one might imagine discovering it by trying to prove the classical Union Bound while working exclusively with Bhattacharyya coefficient.
But in fact, we essentially came up with our proof by iteratively refining and unifying the original proofs of Gao and of Khabbazi~Oskouei--Mancini--Wilde.
(Indeed, along the way we had a version of our proof that was roughly $\alpha$~pages long, for each real number $0.5 \leq \alpha \leq 4.0$.)

The parallels are as follows:
As noted, our main \Cref{lem:2b} essentially becomes the geometric equality \Cref{ineq:gao-trig} when reduced to the pure state case.
In turn, this is equivalent to``inequality~(10)'' in~\cite{Gao15}.
Gao proves this inequality in a different but straightforward fashion, and his deduction of \Cref{thm:gaos1a} from it is also relatively straightforward.
(His ``Lemma~1'' parallels our \Cref{prop:gentle-seq}.)
Then like our proof, Gao's proof of \Cref{thm:gaos1b} is inductive and uses \Cref{ineq:gao-trig} (his ``(10)''), but the inequalities he invokes are significantly more complicated.
It seems that introducing our quantity ``$r_t$'' is important for getting a slick proof.
As for the Khabbazi~Oskouei--Mancini--Wilde proof, the steps in it are all individually straightforward; however, it seems that working explicitly with fidelity, as we do, helps to get a clean proof.
Our key \Cref{lem:2b} may be viewed as hidden in the proof of~\cite[``Lemma~3.3'']{KMW19}; one can extract it upon converting their calculational/iterative proof into an induction.

\subsection*{Acknowledgments}
R.O.\ is supported by NSF grant FET-1909310 and ARO grant W911NF2110001.
We thank Mark Wilde for several remarks that improved the presentation of this paper, and we thank Anurag Anshu for observing the parallel between our proof and Sen's.

\vspace{-2ex}
\bibliographystyle{abbrv}
\bibliography{quantum}
%\bibliographystyleappb{alpha}
%\bibliographyappb{quantum}

\end{document}